\newtheorem{definition}{Definition}
\newtheorem{theorem}{Theorem}
\newcommand{\C}{\ensuremath \mathcal{C}}
\newcommand{\D}{\ensuremath \mathcal{D}}
\newcommand{\Hf}{\ensuremath \mathcal{H}}
\definecolor{color2}{rgb}{0.894118,0.101961,0.109804}
\definecolor{color1}{rgb}{0.215686,0.494118,0.721569}
\definecolor{color3}{rgb}{0.301961,0.686275,0.290196}
\definecolor{color4}{rgb}{0.596078,0.305882,0.639216}
\definecolor{color5}{rgb}{1.000000,0.498039,0.000000}
\definecolor{color6}{rgb}{1.000000,1.000000,0.200000}
\begin{document}

\title{Narrow scope for resolution-limit-free community detection}
\author{V.A. Traag}
\email[Corresponding Author:]{vincent.traag@uclouvain.be}
\author{P. Van Dooren}
\affiliation{ICTEAM, Universit\'e Catholique de
Louvain, B\^atiment Euler, \\
Avenue G. Lema\^itre 4, B-1348 Louvain-la-Neuve, Belgium}
\author{Y. Nesterov}
\affiliation{CORE, Universit\'e Catholique de Louvain,
Voie du Roman Pays 34, B-1348 Louvain-la-Neuve, Belgium}
\keywords{complex networks; community detection; modularity; resolution-limit-free}
\pacs{89.75.Fb, 89.65.-s}

\date{\today}

\begin{abstract}
Detecting communities in large networks has drawn much attention over the years. While modularity remains one of the more popular methods of community detection, the so-called resolution limit remains a significant drawback. To overcome this issue, it was recently suggested that instead of comparing the network to a random null model, as is done in modularity, it should be compared to a constant factor. However, it is unclear what is meant exactly by ``resolution-limit-free'', that is, not suffering from the resolution limit. Furthermore, the question remains what other methods could be classified as resolution-limit-free. In this paper we suggest a rigorous definition and derive some basic properties of resolution-limit-free methods. More importantly, we are able to prove exactly which class of community detection methods are resolution-limit-free. Furthermore, we analyze which methods are not resolution-limit-free, suggesting there is only a limited scope for resolution-limit-free community detection methods. Finally, we provide such a natural formulation, and show it performs superbly.
\end{abstract}
\keywords{complex networks | community detection | modularity | resolution limit | resolution-limit-free}

\maketitle

\section{Introduction}

The last decade has seen an incredible rise in network studies, and will likely continue to rise~\cite{Lazer:2009p10015,Watts:2007p10016}. Besides the study of properties, such as degree distributions, clustering coefficients and average path length~\cite{Newman:2003p432}, many complex networks exhibit some modular structure~\cite{Fortunato:2010p6733,Porter:2009p329}. These communities might represent different functions, or sociological communities, and have been successfully studied on a wide variety of networks, ranging from metabolic networks~\cite{Guimera2005Functional} to mobile phone networks~\cite{Blondel2008Fast} and airline transportation networks~\cite{Guimera2005Worldwide}.

One of the most popular methods for community detection is that of
modularity~\cite{Newman2004Finding}. The past few years suggestions have been
made to extend or alter the original definition, for example, allowing detection
in bipartite networks~\cite{Barber:2007p7162}, networks with negative
links~\cite{Traag:2009p147}, and dynamical networks~\cite{Mucha:2010p321}.
Although modularity optimization seems to be able to accurately identify known
community structures~\cite{Lancichinetti:2008p7072}, it suffers from an inherent
problem, namely a resolution limit~\cite{Fortunato:2007p183}, which affects the
effectiveness of community detection~\cite{Good:2010p10071}. This resolution
limit prevents detection of smaller communities in large networks, although this
effect can be mitigated somewhat by a so-called resolution
parameter~\cite{kumpala07}, which can be related to time scales of random walks
on the network~\cite{Delvenne:2010p11939}. Another approach adds self-loops in
order to circumvent this resolution limit problem~\cite{Arenas:2008p12974}, and
can similarly be related to the random walk approach~\cite{Delvenne:2010p11939}.
The use of such resolution parameters enables the investigation of community
structures at various levels of description. The analysis of which levels of description are meaningful or relevant then becomes important, but we will not investigate this issue here.

Recently, a new method has been suggested that would not suffer from this
resolution limit~\cite{Ronhovde:2010p7486}. For showing a method suffers from a
resolution limit a few clear cases suffice, but the opposite seems more
difficult to argue. That is, although there is no problem for the cases analyzed,
perhaps more complex cases will show some issues not yet considered. Hence, a
proper definition of \emph{resolution-limit-free} is called for, which we will
develop in this paper. Furthermore, the question then is what methods will
suffer from this resolution limit and which not.

We will analyze this question within the framework of the first principle Potts
model as developed by Reichardt and Bornholdt~\cite{Reichardt2007Partitioning}.
Various methods can be derived from this first principle Potts model, among them
modularity, and we will briefly examine them. We will suggest a very simple
alternative, which we term the constant Potts model (CPM). It can be easily
shown that the CPM is resolution-limit-free according to our definition, but it
will follow immediately from the more general theorem we will prove. Arguably,
the CPM is the simplest formulation of any (non-trivial) resolution-limit-free
method, and can be well interpreted.

In the next section, we will briefly examine this first principle Potts model,
review some models that can be derived from it, and introduce the CPM. We will
then briefly explain the problem of the resolution limit when using modularity,
followed by the introduction of the definition of a resolution-limit-free method
(i.e. not suffering from a resolution limit), and we will show some general
properties of resolution-limit-free methods. We will then prove which methods
are resolution-limit-free and analyze which are not. Finally, we show the CPM
method performs superbly.

\section{Potts Model for Community Detection}

First, let us introduce the notation. We consider a connected graph $G=(V,E)$
with $n = |V|$ nodes and $m = |E|$ edges. The adjacency matrix $A_{ij} = 1$ if
there is an $(ij)$ edge, and $0$ otherwise. For weighted graphs the weight of a
link is denoted by $w_{ij}$, while for an unweighted graph we can consider
$w_{ij} = 1$. We denote the community of a node $i$ by $\sigma_i$.

In principle, links within communities should be relatively frequent, while
those between communities should be relatively rare. Building on this idea, we
will (i) reward links within communities; and (ii) penalize missing links within
communities~\cite{Reichardt2007Partitioning}. In general, this can then be
written as
\begin{equation}
 \Hf = -\sum_{ij} (a_{ij} A_{ij} - b_{ij} (1 - A_{ij}) )\delta(\sigma_i,\sigma_j),
  \label{equ:original}
\end{equation}
where $\delta(\sigma_i,\sigma_j) = 1$ if $\sigma_i=\sigma_j$ and zero
otherwise, and with some weights $a_{ij},b_{ij} \geq 0$. Minimal $\Hf$
correspond to desirable partitions, although such a minimum is not necessarily
unique. The choice of the weights $a_{ij}$ and $b_{ij}$ are important, and have
a definite impact on what type of communities are detected. 

\subsection{Previous methods}

In the current literature, at least four different choices exist (and presumably
some other methods may be rewritten as such), leading to four different methods
for detecting communities. We will briefly explicate these four different
approaches.

Reichardt and Bornholdt (RB) set $a_{ij} = w_{ij} - b_{ij}$ and $b_{ij} = \gamma_{RB} p_{ij}$ with a new variable $p_{ij}$ that represents the probability of a link between $i$ and $j$, known as the random null model. Working out their choice of parameters, we arrive at
\begin{equation}
 \Hf_{RB} = -\sum_{ij} (A_{ij}w_{ij} - \gamma_{RB}p_{ij})\delta(\sigma_i,\sigma_j).
\end{equation}
One of the most used null models is the so-called configuration model, which is
$p_{ij} = k_ik_j/2m$, where $k_i=\sum_j A_{ji}$ is the degree of node $i$. By
using this null model, and setting $\gamma_{RB} = 1$ we recover the original
definition of modularity~\cite{Newman2004Finding}. Independent of the exact
choice of the null model $p_{ij}$, it can be shown the method will suffer from a
resolution limit~\cite{kumpala07}, which thereby also holds for
modularity~\cite{Fortunato:2007p183}.

Another approach by Arenas, Fern\'andes and G\'omez (AFG) uses self-loops in
order to try to circumvent the resolution limit~\cite{Arenas:2008p12974}. They
do not explicitly derive their model based on the first principle Potts model,
but it can easily be done. If we set $a_{ij} = w_{ij} - b_{ij}$ and $b_{ij} =
p_{ij}(r) - r\delta(i,j)$, with $p_{ij}(r) = \frac{(k_i + r)(k_i + r)}{2m + nr}$
and $\delta(i,j) = 1$ only if $i=j$ and zero otherwise, we arrive at their model
(up to a multiplicative scaling)
\begin{equation}
  \Hf_{AFG} = -\sum_{ij} \left(A_{ij}w_{ij} + r\delta_{ij} - p_{ij}(r)\right)\delta(\sigma_i, \sigma_j)
\end{equation}
The null model $p_{ij}(r)$ is here defined as the configuration model on the
graph where a self-loop with weight $r$ is added to each node.

Ronhovde and Nussinov (RN) do not include any random null model, in order to avoid
issues with the resolution limit, and in general set $a_{ij} = w_{ij}$ and $b_{ij} =
\gamma_{RN}$ (although for specific networks, such
as with negative weights, they allow some minor changes). Working this out we
obtain
\begin{equation}
 \Hf_{RN} = -\sum_{ij} (A_{ij} (w_{ij} + \gamma_{RN}) - \gamma_{RN})\delta(\sigma_i,\sigma_j).
\end{equation}

Finally, the label propagation method~\cite{Raghavan:2007p11294} can be shown to
be equivalent to the Potts model $-\sum_{ij} A_{ij}w_{ij}
\delta(\sigma_i,\sigma_j)$~\cite{Tibely:2008p10903}, which corresponds to the
weights $a_{ij} = w_{ij}$ and $b_{ij} = 0$. This is the least interesting
formulation, since there is only one global optimum, namely all nodes in a
single community, which is trivial. However, the local minima could be of some
interest.

It is not surprising then that these four different formulations share certain
characteristics for some choice of parameters. The RB model is equivalent to the
RN model up to a multiplicative constant by using an Erd\"os-Reny\`i (ER) null
model, i.e. $p_{ij} = p$ and by setting $\gamma_{RN} = \gamma_{RB}p/(1 -
\gamma_{RB}p)$. For $\gamma_{RN} = 0$ the RN model obviously reduces to the
label propagation method. Finally, for the AFG model, when using $r=0$ we
retrieve the modularity (i.e. the RB model with configuration null model and
$\gamma_{RB} = 1$).

\subsection{Constant Potts model}
We introduce an alternative method, that uses slightly different weights. By defining $a_{ij} = w_{ij} - b_{ij}$ and $b_{ij} = \gamma$, we obtain a version that is similar to both the RB and the RN model, but is simpler and more intuitive to work with. If we work this out, we obtain the rather simple expression
\begin{equation}
  \Hf = -\sum_{ij} (A_{ij} w_{ij} - \gamma)\delta(\sigma_i,\sigma_j).
  \label{equ:our_model}
\end{equation}
Let us call this the constant Potts model (CPM), with the ``constant'' here referring to the comparison of $A_{ij}$ to the constant term $\gamma$. It is clear that this is equivalent to the RN model for unweighted graphs by setting $\gamma=\frac{\gamma_{RN}}{1 + \gamma_{RN}}$ and ignoring the multiplicative constant. Furthermore, it is equal to the RB model when setting $\gamma=\gamma_{RB}p$ for the ER null model. By setting $\gamma=0$ we retrieve the label propagation method. Also, it is highly similar to an earlier Potts model suggested by Reichardt and Bornholdt~\cite{Reichardt:2004p11243}.

If we denote the number of edges\footnote{Or technically, twice the number of
edges in an undirected graph, or the total weight in a weighted graph.} inside
community $c$ by $e_c = \sum_{ij} A_{ij}w_{ij}
\delta(\sigma_i,c)\delta(\sigma_j,c)$, and the number of nodes in community $c$
by $n_c = \sum_{i} \delta(\sigma_i,c)$, we can rewrite Eq.~\eqref{equ:our_model} as
\begin{equation}
  \Hf = -\sum_{c} e_{c} - \gamma n^2_c.
  \label{equ:our_model_comms}
\end{equation}
In other words, the model tries to maximize the number of internal edges while at the same time keeping relatively small communities. The parameter $\gamma$ balances these two imperatives. In fact, the parameter $\gamma$ acts as the inner and outer edge density threshold. That is, suppose there is a community $c$ with $e_{c}$ edges and $n_c$ nodes. Then it is better to split it into two communities $r$ and $s$ whenever
\begin{equation*}
 \frac{e_{r \leftrightarrow s}}{2n_rn_s} < \gamma,
\end{equation*}
where $e_{r \leftrightarrow s}$ is the number of links between community $r$ and $s$. This ratio is exactly the density of links between community $r$ and $s$. So, the link density between communities should be lower than $\gamma$, while the link density within communities should be higher than $\gamma$. This thus provides a clear interpretation of the $\gamma$ parameter.

In general, where $\gamma = \min_{ij} A_{ij}w_{ij}$ the optimal solution is the trivial solution of all nodes in one big community. On the other extreme, when $\gamma = \max_{ij} A_{ij} w_{ij}$, it is optimal to split all nodes in communities, i.e. such that each node forms a community by itself. In fact, communities of one node only exist when $\gamma = \max_{ij} A_{ij} w_{ij}$, since otherwise it will always be beneficial to put the node in one of its neighbors' communities. Hence, for practical purposes $\min_{ij} A_{ij} w_{ij} \leq \gamma \leq \max_{ij} A_{ij} w_{ij}$.

\begin{figure*}
  \includegraphics{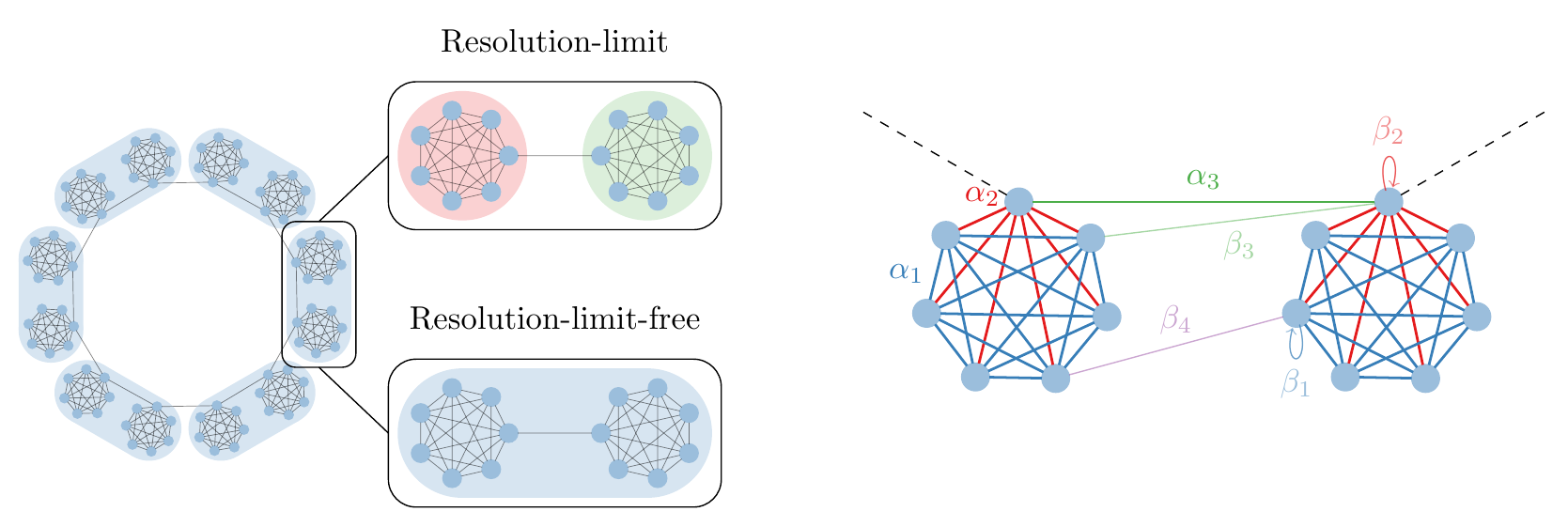}
  \caption{(Color online) The problem of the resolution limit with modularity is
  usually demonstrated on a ring network of cliques. The cliques are as densely
  connected as possible, and as sparsely connected between them, while still
  retaining a connected graph. The resolution limit is said to arise because it
  will merge cliques depending on the size of the network. In fact, methods that
  do not suffer from the resolution limit, i.e. resolution-limit-free methods,
  may merge these cliques also. However, the distinguishing fact between
  resolution-limit and resolution-limit-free methods is that the first will
  detect smaller subcommunities when applied to the subgraph (i.e. not the same
  subpartition), while the latter will not detect smaller subcommunities (i.e.
  the same subpartition will remain optimal on the subgraph). Of course, whether
  the communities should consist of only cliques or of multiple joined cliques
  will still depend on the actual resolution of the method. For CPM this
  resolution parameter is designated by $\gamma$. So, for a particular value of
  $\gamma$ the cliques will be separated, while for another value they will be
  merged. When analyzing the property of resolution-limit-free more in detail,
  we will investigate this ring of clique network more closely. The actual
  weights $a_{ij}$ and $b_{ij}$ need to be the same on isomorphic graphs, which
  restricts the possible number of different weights. We denote these different
  possibilities with $\alpha$'s for the present links (thicker lines) and
  $\beta$'s for the missing links (thinner lines). Please refer to the main text
  for further details.}
  \label{fig:example}
\end{figure*}

\section{Resolution limit}

Traditionally the resolution limit is investigated by analyzing the
counterintuitive merging of communities~\cite{Fortunato:2007p183}, for example
cliques or some smaller communities that are only sparsely interconnected as
displayed in Fig.~\ref{fig:example}. The RB model with a configuration null
model for example will merge two neighboring cliques in this ring network of
cliques when~\cite{kumpala07} $\gamma_{RB} < q/(n_c(n_c - 1) + 2)$, where $q$ is
the number of cliques and $n_c$ is the number of nodes of a clique. Since the
number of cliques $q$ is a global variable, it shows modularity might be ``hiding''
some smaller communities within larger communities, depending on the size of the
network. Indeed in~\cite{Fortunato:2007p183} it was suggested to look at each
community to consider whether it had any sub communities. Some different, though
related, problems with modularity were noticed in~\cite{Brandes:2007p196} and
more recently in~\cite{Krings:2011p12227}.

The AFG model considers self-loops of a certain weight to overcome this
problem~\cite{Arenas:2008p12974}. Yet the model still depends on a null
model, and so it is not surprising to find that the merging still depends on
some global parameters. The implicit inequality for merging two cliques in the
ring network of cliques is $q > n_c(n_c - 1) + 2 + n_cr$, which for $\gamma_{RB}
= 1$ and $r = 0$ matches the previous result. Although the resolution parameter
$r$ might be used to investigate the community structure at various scales
similar to the $\gamma_{RB}$ resolution parameter, it does not fundamentally
address the issues of the resolution limit.

The RN model, on the other hand, will only join two cliques
when~\cite{Ronhovde:2010p7486} $\gamma_{RN} < 1/(n^2_c - 1)$,
which does not depend on the number of cliques $q$, and depends only on the
local variable $n_c$, so is argued not to suffer from any resolution limit.
For the CPM suggested here, we arrive at the condition $\gamma < 1/n^2_c$,
which also does not depend on the number of cliques $q$ and can hence also said
to be resolution-limit-free. More general, CPM favors to cluster $r$ consecutive
cliques instead of $r-1$ at the point when $\gamma < 2/(r(r-1)n_c^2)$.

However, it remains somewhat unclear what is meant exactly by
resolution-limit-free in the above discussion, and the label
\emph{resolution-limit-free} requires a more precise definition. Consider for
example that we take away the dependence on the number of links in the
configuration null-model, so that we take $p_{ij} = k_ik_j$. Notice that this
only corresponds to a multiplicative rescaling of $\gamma_{RB}$ by $2m$.
Revisiting the case above, we come to the conclusion that cliques are separated
whenever $2\gamma_{RB} > (n_c(n_c - 1) + 2)^{-2}$, which unsurprisingly no
longer depends on any global variables. By the argument employed previously, the
method should be resolution-limit-free.

Not all problems have disappeared however. Suppose we take the subgraph
consisting of only two of these cliques. We analyze when the method would merge
the two cliques in this subgraph, which is the case whenever $2\gamma_{RB} <
(n_c(n_c - 1) + 1)^{-2}$. Even though neither inequality depends on any global
variables, a problem remains. Combining the above two inequalities, we obtain
that whenever
\begin{equation*}
	(n_c(n_c - 1) + 2)^{-2} <  2\gamma_{RB} < (n_c(n_c - 1) + 1)^{-2},
\end{equation*}
the method will separate the cliques in the larger graph, yet merge them in the subgraph.

The above discussion motivates us to consider the following definition of a
\emph{resolution-limit-free} method. The general idea is that when looking at
any induced subgraph of the original graph, the partitioning results should not
be changed. In order to introduce this definition, let $\Hf$ be any objective
function (which we want to minimize), we then call a partition $\C$ for a graph
$\Hf$-optimal whenever $\Hf(\C) \leq \Hf(\C')$ for any other partition $\C'$. We
can then define \emph{resolution-limit-free} as follows.

\begin{definition}
Let $\C = \{C_1, C_2, \ldots, C_q\}$ be a $\Hf$-optimal partition of a graph $G$. Then the objective function $\Hf$ is called \emph{resolution-limit-free} if for each subgraph $H$ induced by $\D \subset \C$, the partition $\D$ is also $\Hf$-optimal.
\end{definition}

Furthermore, we introduce the notion of additive objective functions.

\begin{definition}
An objective function $\Hf$ for a partition $\C = \{C_1,\ldots, C_q\}$ is called additive whenever $\Hf(\C) = \sum_i \Hf(C_i)$, where $\Hf(C_i)$ is the objective function defined on the subgraph $H$ induced by $C_i$.
\end{definition}

If we have an $\Hf$-optimal partition $\C$ for an additive resolution-limit-free objective function $\Hf$, we can replace subpartitions of $\C$ by other optimal subpartitions.

\begin{theorem}
Given an additive resolution-limit-free objective function $\Hf$, let $\C$ be an $\Hf$-optimal partition of a graph $G$ and let $H \subset G$ be the induced subgraph by $\D \subset \C$. If $\D'$ is an alternative optimal partition of $H$ then $\C' = \C \setminus \D \cup \D'$ is also $\Hf$-optimal.
\end{theorem}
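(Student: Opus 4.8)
The plan is to reduce everything to the additive decomposition of $\Hf$ and then invoke the resolution-limit-free property to compare the two subpartitions on $H$. First I would check that $\C'$ is a legitimate partition of $G$: the communities in $\D'$ cover exactly the node set of $H$, which is the same set covered by $\D$, while $\C\setminus\D$ covers the remaining nodes; hence $\C' = (\C\setminus\D)\cup\D'$ does partition $V$.

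Then, using additivity, I would write
\begin{equation*}
\Hf(\C) = \sum_{C_i \in \C\setminus\D} \Hf(C_i) + \sum_{C_i \in \D} \Hf(C_i),
\end{equation*}
and similarly for $\C'$ with $\D'$ in place of $\D$. The key point is that for any community $C_i \subseteq H$, the subgraph induced by $C_i$ in $G$ coincides with the one induced by $C_i$ in $H$ (because $H$ itself is induced), so the per-community values $\Hf(C_i)$ are the same whether $C_i$ is viewed inside $G$ or inside $H$. Applying additivity on $H$ then gives $\sum_{C_i \in \D} \Hf(C_i) = \Hf(\D)$ and $\sum_{C_i \in \D'} \Hf(C_i) = \Hf(\D')$, where the right-hand sides denote the objective values of $\D$ and $\D'$ regarded as partitions of $H$.

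Next I would invoke the two hypotheses. Since $\C$ is $\Hf$-optimal on $G$ and $\Hf$ is resolution-limit-free, Definition~1 tells us that $\D$ is $\Hf$-optimal on $H$. By assumption $\D'$ is an alternative optimal partition of $H$, so $\Hf(\D') = \Hf(\D)$. Substituting into the two decompositions, and noting that the $\C\setminus\D$ terms are identical in both, yields $\Hf(\C') = \Hf(\C)$. Finally, optimality of $\C$ gives $\Hf(\C) \le \Hf(\C'')$ for every partition $\C''$ of $G$, whence $\Hf(\C') = \Hf(\C) \le \Hf(\C'')$, i.e.\ $\C'$ is $\Hf$-optimal.

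The argument is short, and the step I would be most careful about is the consistency of the per-community contribution between the ambient graph and the induced subgraph. Additivity as stated already builds each $\Hf(C_i)$ from the subgraph induced by $C_i$ alone, so no cross-terms coupling $\D$ to $\C\setminus\D$ ever appear; this is precisely what lets the unchanged communities cancel. Were $\Hf$ to contain any interaction term between distinct communities, the decomposition would break and the conclusion would fail. I would therefore emphasise that additivity does all the heavy lifting here, while the resolution-limit-free property contributes only the single equality $\Hf(\D)=\Hf(\D')$.
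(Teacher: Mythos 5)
Your proof is correct and follows essentially the same route as the paper: decompose $\Hf$ additively so the $\C \setminus \D$ terms are identical on both sides, then compare $\D$ and $\D'$ on $H$ and invoke the optimality of $\C$. The only difference is that the paper's proof needs just the inequality $\Hf(\D') \leq \Hf(\D)$, which follows from the optimality of $\D'$ on $H$ alone, whereas you derive the equality $\Hf(\D') = \Hf(\D)$ by explicitly invoking the resolution-limit-free property --- a harmless but slightly roundabout detour, since that hypothesis is not actually needed for the comparison step.
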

\begin{proof}
Define $\C'$ and $\D'$ as in the theorem. By additivity, $\Hf(\C') = \Hf(\C \setminus \D) + \Hf(\D')$, and by optimality $\Hf(\D') \leq H(\D)$. Since also $\Hf(\C) = \Hf(\C \setminus \D) + \Hf(\D)$ we obtain $\Hf(\C') \leq \Hf(\C)$, so $\C'$ is also optimal.
\end{proof}

Although, this might seem to contradict the NP-hardness of community detection
methods, this is not the case. It states that when there are two optimal
partitions, any combination of those partitions are optimal, so in a certain
sense, they are spanning a space of optimal partitions. It does not say whether
such a partition can be easily found. Also, there might be two optimal
partitions that cannot be obtained by recombining them, because all communities
partly overlap with each other.

It is also possible to prove that a complete graph $K_n$ with $n$ nodes is never
split (unless into all nodes separately).

\begin{theorem}
Given a resolution-limit-free objective function $\Hf$, the $\Hf$-optimal partition of $K_n$ for all $n$ is either only one community, namely all nodes, or $n$ communities consisting each of one node.
\end{theorem}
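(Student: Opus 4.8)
The plan is to argue by contradiction, using resolution-limit-freeness to localise the question to two blocks and then exploiting the symmetry of the complete graph. Suppose that for some $n$ the $\Hf$-optimal partition $\C$ of $K_n$ is neither of the two trivial partitions. Then $\C$ has at least two communities (else it is the single all-node community) and at least one community, say $C_1$, has $n_1\ge 2$ nodes (else it is the all-singleton partition). Fix a second community $C_2$ with $n_2\ge 1$ nodes.

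First I would apply the definition of resolution-limit-freeness to the sub-collection $\D=\{C_1,C_2\}\subset\C$: its induced subgraph is the complete graph $K_{n_1+n_2}$, so the two-block partition of sizes $(n_1,n_2)$ must itself be $\Hf$-optimal there. (Applying it to $\D=\{C_1\}$ likewise shows the single-community partition is optimal on $K_{n_1}$, which is the ``a clique is never split'' intuition.) It therefore suffices to contradict the optimality of a size-$(n_1,n_2)$ partition of $K_{n_1+n_2}$ with $n_1\ge 2$.

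The decisive step is to compute $\Hf$ on a complete graph. Since the weights $a_{ij},b_{ij}$ in \eqref{equ:original} must agree on isomorphic graphs and every vertex of $K_m$ is interchangeable, the present links all carry one common weight $\alpha$ and no within-community link is ever missing; all cross-community and missing-link terms drop out, so that
\[
  \Hf=\mathrm{const}-\alpha\sum_c n_c^2 .
\]
Minimising $\Hf$ is thus equivalent to pushing $\sum_c n_c^2$ to an extreme, and over all partitions of $m$ nodes this sum is uniquely maximised by a single block and uniquely minimised by all singletons. Hence for $\alpha\neq 0$ the only optimum is one of the two trivial partitions, and no profile containing a block of intermediate size $n_1$ can be optimal --- contradicting the previous step, and since $n$ was arbitrary this proves the claim for every $n$.

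The main obstacle is exactly this reduction to $\sum_c n_c^2$: the conclusion hinges on $\Hf$ being a function of the size profile that is strictly monotone in $\sum_c n_c^2$, which is what the isomorphism-invariant pairwise form of \eqref{equ:original} delivers. I would take care to (i) justify that isomorphism invariance really forces a single common weight on the present links of a complete graph, so the collapse to $\sum_c n_c^2$ is legitimate; (ii) use the \emph{strictness} of the extremal characterisation of $\sum_c n_c^2$ to exclude ties with intermediate profiles; and (iii) dispose of the degenerate case $\alpha=0$ separately, where all partitions tie and the statement is to be read as ``the trivial partitions are among the optima.'' Resolution-limit-freeness itself is needed only for the localisation; the strict separation of the two extremes carries the real weight.
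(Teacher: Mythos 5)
Your proposal has a genuine gap: it proves a narrower statement than the theorem actually asserts. The theorem quantifies over an \emph{arbitrary} resolution-limit-free objective function $\Hf$; the paper's definition of resolution-limit-free applies to any function of (graph, partition) pairs, and, unlike the later theorem on local weights, this theorem is \emph{not} restricted to Hamiltonians of the pairwise form of Eq.~\eqref{equ:original}. Your decisive step --- the collapse of $\Hf$ on a complete graph to $\mathrm{const}-\alpha\sum_c n_c^2$ --- is available only after you additionally assume (a) that $\Hf$ has the form of Eq.~\eqref{equ:original} and (b) that its weights are isomorphism-invariant. Neither assumption is granted by the hypothesis ``$\Hf$ is resolution-limit-free''; for a general objective function there are no weights $a_{ij},b_{ij}$ at all, and nothing forces the restriction of $\Hf$ to $K_m$ to be a monotone function of the block-size profile. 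Your own closing remark exposes the mismatch: in your argument resolution-limit-freeness does essentially no work (your evaluation of $\Hf$ on $K_n$ itself already decides the optimum, so the ``localisation'' could be deleted), whereas the theorem is precisely a statement about what the resolution-limit-free property \emph{alone} forces on arbitrary objective functions. A proof in which the central hypothesis is dispensable cannot be a proof of the theorem in its stated generality.

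The paper's own proof is purely combinatorial and uses only that hypothesis: assume some optimal partition $\C$ of $K_n$ has $1<|\C|<n$; any sub-collection $\D\subset\C$ induces a smaller complete graph, and a suitably chosen $\D$ (for instance a non-singleton community together with one further community) is a non-trivial partition of that smaller complete graph, hence --- arguing inductively on $n$ --- not optimal there; by resolution-limit-freeness this contradicts the optimality of $\C$. So your first step (localisation to two blocks) is the same move as the paper's, but the paper closes the argument by induction rather than by evaluating $\Hf$, which is exactly the part your approach cannot do without extra structural assumptions. As a side remark, your computation is not worthless: for the concrete Potts-type models of Eq.~\eqref{equ:original} with isomorphism-invariant weights it yields a sharper, quantitative statement (uniqueness of the trivial optimum when $\alpha\neq0$, no appeal to resolution-limit-freeness needed), and the tie degeneracy you flag at $\alpha=0$ is a real subtlety --- as is the two-community case $|\C|=2$, where every proper sub-collection of $\C$ is trivial on its induced subgraph and the inductive contradiction gives no purchase; the paper's terse proof glosses over both points.
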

\begin{proof}
Assume on the contrary there is an optimal partition $\C$ of $K_n$ such that $1 < |\C| < n$. Then for any $\D \subset \C$ the subgraph $H$ induced by $\D$ is a complete graph. But by assumption, $\D$ is then not optimal, and by resolution-limit-free, $\C$ is then not optimal. Hence, inductively, the theorem must hold for all $n$.
\end{proof}

Also notice that a resolution-limit-free method will never depend on the size of the
network to merge cliques in the ring of cliques network. This can be easily seen
from the fact that a subgraph of a large ring of cliques network also appears in
a smaller ring. So if the method would merge cliques in some large graph, by
the resolution-limit-free property, it would also need to merge them together in
the smaller graph. Hence, the actual merging cannot depend on the size of the
network. In this sense it captures this prior concept of the resolution limit.

Equipped with this definition, we can analyze the first principle Potts model
further. For example, what conditions should be imposed on the weights $a_{ij}$
and $b_{ij}$ in Eq.~\eqref{equ:original} for the method to be
resolution-limit-free? Would a method that takes into account the local number
of triangles be resolution-limit-free? Or would it be possible to use the
shortest (weighted) path for example?

We can prove that CPM is \emph{resolution-limit-free} in this sense, just like
the RN model and the LP model. The CPM model is also trivially shown to be
additive by Eq.~\eqref{equ:our_model_comms}. Perhaps it is less obvious, but the
RB model is not additive, since it cannot be defined in terms of
\emph{independent} contributions, i.e. the contribution $\Hf(C_i)$ per community
depends on the whole graph $G$, instead of only on the subgraph $H$ induced by
$C_i$. Nor is the RB model resolution-limit-free according to our definition,
regardless of the null model~\cite{kumpala07}, and hence modularity is not
resolution-limit-free.  Furthermore, as we have seen, also when using $p_{ij} =
k_ik_j$ the model is not resolution-limit-free. Finally, the AFG model is not
resolution-limit-free either. 

Since the CPM model is also related to the RB model using the ER null-model, it
is tempting to conclude it is also resolution-limit-free. Indeed, this might be
said to be the case, if we choose $p$ independently of the graph, i.e. not
define it as $p = m/n(n-1)$, and simply choose it as some value $p \in
\mathbb{R}$. However, we then obviously retrieve the CPM model. This shows that
resolution-limit-free methods are strongly constrained, and there is only a fine
line between resolution-limit and resolution-limit-free methods.

This follows from the more general theorem we will now prove. For this, we first
introduce the notion of local weights. Again, building on the idea of
subgraphs, we define local weights as weights that do not change when looking to
subgraphs.

\begin{definition}
  Let $G$ be a graph, and let $a_{ij}$ and $b_{ij}$ as in Eq.~\eqref{equ:original} be the associated weights. Let $H$ be a subgraph of $G$ with associated weights $a'_{ij}$ and $b'_{ij}$. Then the weights are called \emph{local} if $a_{ij} = \lambda a'_{ij}$ and $b_{ij} = \lambda b'_{ij}$, where $\lambda = \lambda(H) > 0$ can depend on the subgraph $H$.
\end{definition}
Clearly then, the RN and CPM model have local weights, while the RB and AFG
model do not.  This definition says that local weights should be independent of
the graph $G$ in a certain sense. In fact, it is quite a strong requirement, as
it should even hold for a single link $(ij)$ in the subgraph where only $i$ and
$j$ are included. That means it can not depend on any other link \emph{but} the
very link itself. Since for missing links, there is (usually) no associated
weight or anything, it can only be constant. There are some exceptions, such as
multipartite networks, or networks embedded in geographical
space~\cite{Expert:2011p12973,Lambiotte:2008p2987}, where some sensible
non-constant local weights can be provided. Hence, the RN model and the CPM
model are one of the few sensible options available for having local variables.
We can now prove the more general statement that methods using local weights are
resolution-limit-free.

\begin{theorem}
 The objective function $\Hf$ as defined in Eq.~\eqref{equ:original} is \emph{resolution-limit-free} if it has \emph{local} weights.
\end{theorem}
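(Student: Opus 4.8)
The plan is to argue by contradiction, exploiting two facts: that the objective in Eq.~\eqref{equ:original} decomposes trivially as a sum over communities within any \emph{fixed} graph, and that local weights rescale this sum by a single positive factor depending only on the induced subgraph. Write $\Hf_G$ and $\Hf_H$ for the objective evaluated with the weights $a_{ij},b_{ij}$ of $G$ and the weights $a'_{ij},b'_{ij}$ of the induced subgraph $H$ respectively. Suppose $\C$ is $\Hf$-optimal on $G$, let $\D \subset \C$ induce $H$, and assume for contradiction that $\D$ is \emph{not} $\Hf$-optimal on $H$, so that some partition $\D'$ of $H$ satisfies $\Hf_H(\D') < \Hf_H(\D)$. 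First I would record the within-graph decomposition: since $\delta(\sigma_i,\sigma_j)$ vanishes unless $i$ and $j$ share a community, Eq.~\eqref{equ:original} splits as $\Hf_G(\C) = \sum_c \Hf_G(c)$ with $\Hf_G(c) = -\sum_{i,j \in c}(a_{ij}A_{ij} - b_{ij}(1-A_{ij}))$, and likewise on $H$. This is a mere rearrangement valid for every $\Hf$ of this form and should not be confused with additivity in the sense of Definition~2, since here each contribution is still evaluated with the weights of the ambient graph; in particular $\Hf_G(\C) = \Hf_G(\C\setminus\D) + \Hf_G(\D)$ and $\Hf_G(\C') = \Hf_G(\C\setminus\D) + \Hf_G(\D')$ for $\C' = \C\setminus\D\cup\D'$.

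The key step, which I expect to be the only delicate point, is to transport the strict inequality from $H$ to $G$. Because $H$ is the subgraph induced on $V(H) = \bigcup_{c\in\D} c$, the adjacency entries $A_{ij}$ for $i,j \in V(H)$ agree in $G$ and $H$, so the two contributions differ only through the weights. Summing $a_{ij} = \lambda a'_{ij}$ and $b_{ij} = \lambda b'_{ij}$ over the pairs inside any community $c \subseteq V(H)$ gives $\Hf_G(c) = \lambda\,\Hf_H(c)$, and hence $\Hf_G(\mathcal P) = \lambda\,\Hf_H(\mathcal P)$ for \emph{every} partition $\mathcal P$ of $V(H)$. Crucially, $\lambda = \lambda(H)$ depends only on the subgraph $H$, which is fixed by its vertex set; since $\D$ and $\D'$ partition the same vertex set, one and the same $\lambda$ governs both. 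This is where locality is indispensable: for a non-local choice of weights the ratio between ambient and subgraph contributions need not be a single positive constant across $V(H)$ (as for the RB and AFG null models, whose weights involve global quantities such as the total edge count), and then the strict inequality obtained on $H$ could reverse when lifted to $G$.

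Finally I would assemble the contradiction. As $\lambda > 0$, the inequality $\Hf_H(\D') < \Hf_H(\D)$ yields $\Hf_G(\D') < \Hf_G(\D)$, and substituting into the decomposition gives $\Hf_G(\C') - \Hf_G(\C) = \Hf_G(\D') - \Hf_G(\D) < 0$, contradicting the $\Hf$-optimality of $\C$ on $G$. Hence $\D$ must be $\Hf$-optimal on $H$, which is precisely the resolution-limit-free property. Everything besides the partition-independence of $\lambda$ is routine bookkeeping, so that is the step I would state most carefully.
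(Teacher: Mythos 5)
Your proposal is correct and follows essentially the same route as the paper's proof: assume $\D$ is suboptimal on $H$, splice a strictly better partition of $H$ back into $\C$ to form $\C'$, and use locality (one and the same $\lambda = \lambda(H) > 0$ scaling the contributions of all partitions of $V(H)$) to transport the strict inequality to $G$, contradicting the optimality of $\C$. The only difference is expository: you spell out the within-graph decomposition and the partition-independence of $\lambda$, which the paper uses implicitly.
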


\begin{proof}
  Let $\C$ be the optimal partition for $G$ with community assignments $c_i$, $\D \subset \C$ a subset of this partition, and $H$ the subgraph induced by $\D$ with $h$ nodes. Furthermore, we denote by $d_i$ the community indices of $\D$, such that $d_i = c_i$ for $1 \leq i \leq h$ and by $A'$ the adjacency matrix of $H$, so that $A_{ij} = A'_{ij}$ for $1 \leq i \leq h$. Assume $\D$ is not optimal for $H$, and that $\D^*$ is optimal, such that $\Hf(\D) > \Hf(\D^*)$. Then define $c^*$ by setting $c^*_i = d^*_i$ for $1 \leq i \leq h$ and $c^*_i = c_i$ for $h < i \leq n$. Then because the result is unchanged for the nodes $h < i \leq n$, we have that
  \begin{equation*}
    \Delta \Hf = \Hf(\C) - \Hf(\C^*) = \frac{1}{\lambda}(\Hf(\D) - \Hf(\D^*))  > 0
  \end{equation*}
  where the last step follows from the locality of the weights $a_{ij}$ and $b_{ij}$. This inequality contradicts the optimality of $\C$. Hence, for all induced subgraphs $H$, the partition $\D$ is optimal, and the objective function $\Hf$ is \emph{resolution-limit-free}.
\end{proof}

The converse is unfortunately not true. Consider a graph $G$ with some weights $a_{ij}$ and $b_{ij}$. Then pick a subgraph $H$ induced by some subpartition $\D$, and define the weights $a'_{ij} = a_{ij}$ and $b'_{ij} = b_{ij}$ except for one particular edge $(kl)$, for which we set $a'_{kl} = a_{kl} + \epsilon$. Then for some $\epsilon > 0$, the original subpartition will remain optimal in $H$, while the weights are not local. Since the small change of the weight is \emph{only} made when considering the graph $H$, all other subpartitions will always remain optimal. Of course, such a definition of the weight is rather odd, so in practice we will never use it.

Even though the converse is not true, we can say a bit more. The weights can be
a bit different indeed, but there is not that much room for these differences.
We demonstrate this on the ring network of cliques. The weights can depend only
on the graph, so if $G$ and $G'$ are two isomorphic graphs, then $a_{ij}(G) =
a_{i'j'}(G')$, where $i$ and $i'$ are two isomorphic nodes. Hence, only a number
of weights can be different from each other in the ring network, as illustrated
in Fig.~\ref{fig:example}. All nodes within a clique are isomorphic, except the
node that connects to other cliques. So, all the edges among those $n_c - 1$
nodes are similar, and will have the same weight $\alpha_1$. All edges from
these $n_c - 1$ nodes to the ``outside'' node will have the same weight
$\alpha_2$. Finally, the edge connecting two cliques is denoted by $\alpha_3$.
The missing self-loop for the special outside node is denote by $\beta_2$
while the missing self-loop for the other nodes in the cliques is denoted by
$\beta_1$. Finally, there is (1) a missing link between the outside node and a
normal node denoted by $\beta_3$; and (2) a missing link between two normal
nodes, denoted by $\beta_4$. These weights are illustrated in Fig.~\ref{fig:example}.

Let us now analyze when the method will not be resolution-limit-free. Then, the cliques must be merged in some (large) graph, while for the subgraph consisting of these two merged cliques, they should be separated by the method. Or conversely, they should be separated in some (large) graph, but merged in the subgraph. We can write the $\Hf_s$ for all $q$ cliques being separate as
  \begin{multline*}
  \Hf_s = -q( 
	    {\color{color1}\alpha_1(n_c - 1)(n_c - 2)} 
	  + {\color{color2}2\alpha_2 (n_c - 1)} \\
	  - {\color{color1}(n_c - 1) \beta_1} 
	  - {\color{color2}\beta_2} )
  \end{multline*}
  and $\Hf_m$ for merging all two consecutive cliques as
  \begin{multline*}
  \Hf_m = -\frac{q}{2} 2(
	    {\color{color1}\alpha_1(n_c - 1)(n_c - 2)} 
	  + {\color{color2}2\alpha_2 (n_c - 1)} \\
	  - {\color{color1}(n_c - 1)\beta_1}
	  - {\color{color2}\beta_2} 
	  + {\color{color3}\alpha_3 } 
	  - {\color{color3}\beta_3(n_c - 1)} 
	  - {\color{color4}\beta_4(n_c - 1)^2} ) 
  \end{multline*}  
  Furthermore, for the induced subgraph $H$ consisting of two consecutive cliques, we can write $\Hf'_s$ for separating the two cliques and $\Hf'_m$ for merging them, similarly as before, where $\alpha'$ and $\beta'$ are the weights for the subgraph $H$. Then the method is not resolution-limit-free if it would merge the two cliques at a higher level (i.e. when $\Hf_m < \Hf_s$) yet would not merge them at smaller scale (i.e. when $\Hf'_s < \Hf'_m$), or vice versa. Working out this condition for $\Hf_m < \Hf_s$ (and similarly for $\Hf_m > \Hf_s$) gives us
  \begin{equation*}
    {\color{color3}\alpha_3} > (n_c - 1)({\color{color4}\beta_4(n_c - 1)} + {\color{color3}\beta_3}),
  \end{equation*}
  while for $\Hf'_s < \Hf'_m$ (and similarly for $\Hf'_s > \Hf'_m$) we obtain
  \begin{equation*}
    {\color{color3}\alpha'_3} < (n_c - 1)({\color{color4}\beta'_4(n_c - 1)} + {\color{color3}\beta'_3}).
  \end{equation*}  
  Combining these two inequalities for both cases we obtain
  \begin{alignat}{3}
    {\color{color3}\alpha'_3}({\color{color4}\beta_4(n_c - 1)} + {\color{color3}\beta_3}) &< {\color{color3}\alpha_3}({\color{color4}\beta'_4(n_c - 1)} + {\color{color3}\beta'_3}),  
    \label{equ:ms} \\
    {\color{color3}\alpha'_3}({\color{color4}\beta_4(n_c - 1)} + {\color{color3}\beta_3}) &> {\color{color3}\alpha_3}({\color{color4}\beta'_4(n_c - 1)} + {\color{color3}\beta'_3}).
    \label{equ:sm}
  \end{alignat}
  where either Eq.~\eqref{equ:ms} or~\eqref{equ:sm} should hold. Hence, only if the left hand side equals the right hand side, it does not constitute a counter example. Working out this equality, there are two possibilities. Either the weights should be local, or the following equality should hold
  \begin{equation*}
  	n_c - 1 = \frac{
  						{\color{color3}\alpha_3}{\color{color4}\beta'_3}
	  					-
  						{\color{color3}\alpha'_3}{\color{color4}\beta_3}
  					}
  					{
  						{\color{color3}\alpha'_3}{\color{color4}\beta_4}
	  					-
  						{\color{color3}\alpha_3}{\color{color4}\beta'_4}
  					}.
  \end{equation*}  
Obviously, this again constitutes some very particular case of non-local weights. We can repeat this same procedure for other subpartitions, and for other graphs, thereby forcing the weights to be of a very particular kind. This thus leaves little room for having any sensible non-local definition such that the method is resolution-limit-free.

This means resolution-limit-free community detection has only a quite limited
scope. In fact, the CPM seems to be the simplest non-trivial sensible
formulation of any general resolution-limit-free method, although there is some
leeway for special graphs (i.e. having some node properties, such as
multipartite graphs). This is not to say that methods with non-local weights
(e.g. modularity, AFG, number of triangles, shortest path, betweenness) should
never be used for community detection at all, they are just never
resolution-limit-free.

\section{Performance}

In order to assess the performance of the proposed CPM model, we performed
various tests. Using the latest suggested test
networks~\cite{Lancichinetti:2008p7072} we find that the CPM model and the
accompanying algorithm is both very accurate and efficient. More
details on the efficient Louvain-like algorithm, the test procedure and the
calculations on the resolution parameters can be found in the appendix at the
end of this article.

We have examined both directed test networks as well as hierarchical test
networks, where communities exist at multiple levels in the data. Communities
become less discernible for higher values of the parameter $\mu$ of having
links outside its community. For hierarchical communities, there are two such
parameters: $\mu_1$ for the first level (the large communities), and $\mu_2$ for
the second level (the subcommunities). These mixing parameters allow us to
calculate what the inner and outer densities of communities are. We exploit this
fact to calculate the proper $\gamma=\gamma^*$ in order to investigate the performance of
the CPM, and similarly $\gamma_{RB}=\gamma^*_{RB}$ for the RB model using the configuration null model. This way, the results do not depend on any particular method to
determine the correct parameter, which represents another challenging problem.

Some of the earlier algorithms and models that showed excellent performance~\cite{Lancichinetti:2009p7112} are
the Louvain~\cite{Blondel2008Fast} method for optimizing modularity, and the
Infomap method~\cite{Rosvall2007Informationtheoretic}. In
Fig.~\ref{fig:performance} we have displayed the results for (1) the CPM model; (2) the RB model using an ER null model (i.e. CPM with $\gamma=p$); (3) the RB model using the configuration null model\footnote{Since we use directed test networks,
we use a small adjustment to use the directed configuration null
model~\cite{Leicht2008Community}} with ``corrected'' parameter value $\gamma_{RB}^*$; (4) the modularity model, (i.e. RB using the configuration null model and $\gamma_{RB}=1$); and finally (5) the Infomap method. We have performed tests on networks having $n=10^3$ and $n=10^4$ nodes, with a degree distribution exponent of $2$ (with average degree $15$ and maximum degree $50$) and community size distribution exponent $1$ (with community sizes ranging from $20$ to $100$). Per value of $\mu$ 100 graphs were used to obtain this result. 

It can be clearly seen that CPM performs extremely well. The difference in performance of the CPM model in comparison to the RB model using the ER null model is especially striking. This is not a consequence of the method being resolution-limit-free or not, but it rather depends on choosing the correct resolution parameter. Obviously then, setting $\gamma = p$ is in general
not a very good strategy, and for general networks one should carefully analyze
at which resolution the network contains meaningful partitions.

\begin{figure}[bt]
    \centering
    \includegraphics{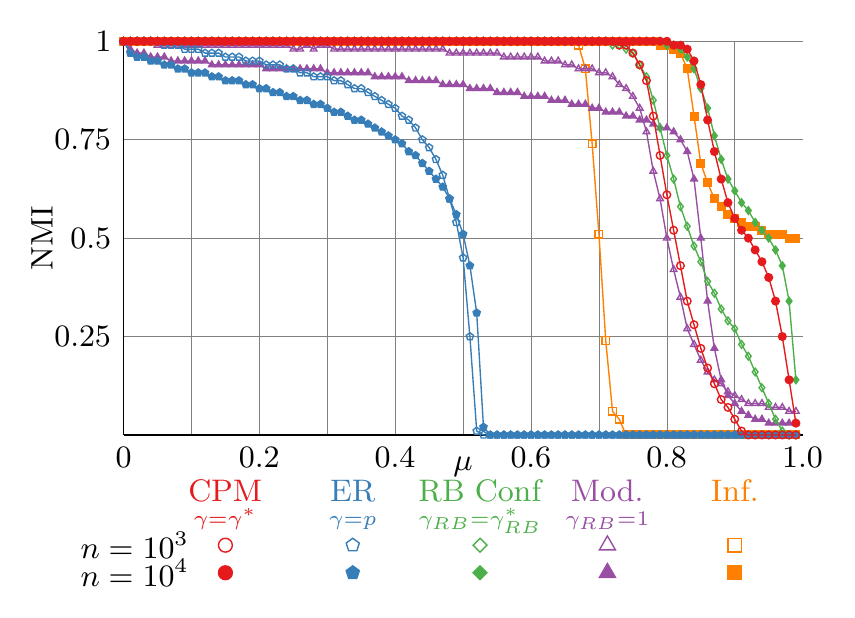}
    \caption{(Color online) Performance of various community detection methods in terms of Normalized Mutual Information (NMI) depending on mixing parameter $\mu$. Results are shown for (1) CPM, using the calculated resolution parameter $\gamma^*$; (2) the RB model with an ER random null model (i.e. corresponding to CPM using $\gamma=p$); (3) the RB model with a configuration null model with a calculated resolution parameter $\gamma_{RB}^*$; (4) modularity, in other words, the RB model with a configuration model using $\gamma_{RB}=1$; and finally (5) the Infomap method. The open symbols denote results for $n=10^3$ and the closed symbols for $n =10^4$. \label{fig:performance}}
  
\end{figure}

\begin{figure}[bt]
    \centering
   	\includegraphics{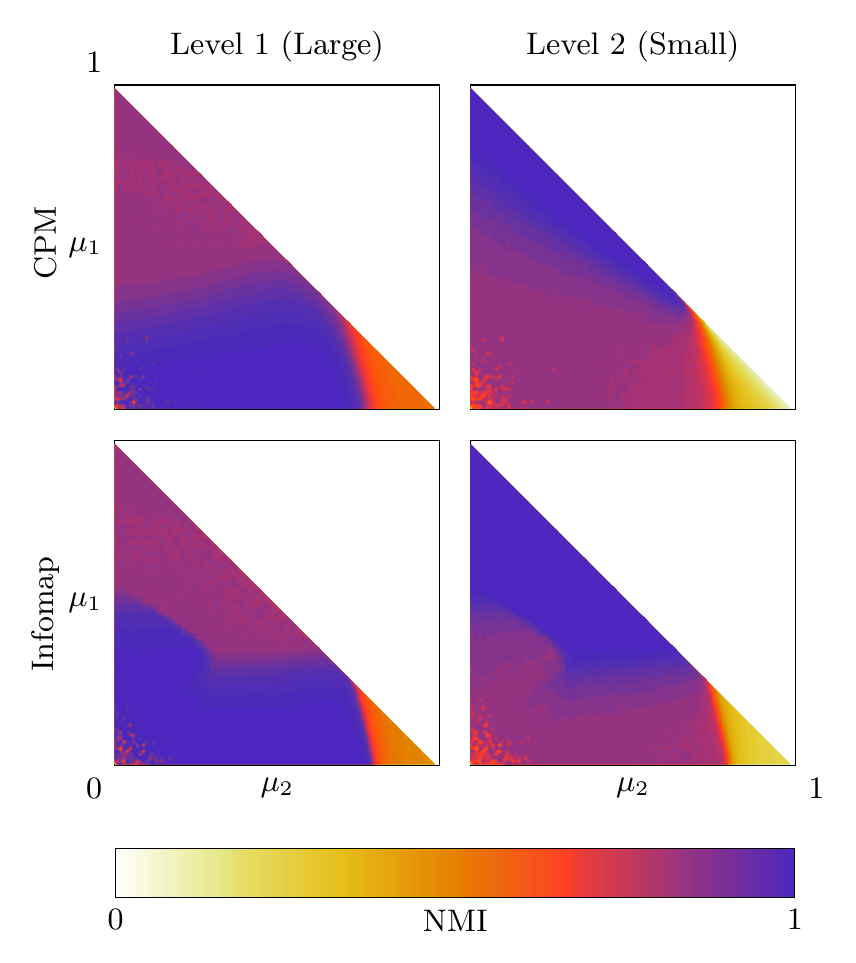}     
   	\caption{(Color online) Performance of CPM and Infomap on a hierarchical benchmark network in terms of Normalized Mutual Information (NMI) depending on mixing parameters $\mu_1$ and $\mu_2$. The networks had $n =10^4$ nodes with a degree distribution exponent of $2$ (with average degree $20$ and maximum degree $50$) and community size distribution exponent $1$ for both small (size ranging from $10$ to $50$) and large communities (size ranging from $50$ to $300$). Per combination of parameters 10 graphs were used to obtain this result. The resolution parameters $\gamma$ for the two different levels were calculated analytically for CPM.\label{fig:performance_hierarchical}}
\end{figure}

A similar effect also shows for modularity (or the RB model using the configuration model), such that when $\gamma_{RB}$ is chosen appropriately (i.e. using $\gamma_{RB} = \gamma_{RB}^*$) the method will perform better than at the ordinary resolution $\gamma_{RB}=1$. Indeed, the results of the CPM model and the RB model using the configuration null model using $\gamma^*_{RB}$ are rather comparable, although the latter's performance drops less quickly, and then outperforms CPM. Interestingly, when we use the ordinary resolution $\gamma_{RB}=1$, it becomes more difficult to detect communities in large networks using the configuration model. This constrasts with the results when we choose the appropriate resolution parameter $\gamma^*$, $\gamma_{RB}^*$ and indeed also for the Infomap method. Indeed it can be shown that the communities should become more clearly discernible for larger networks when the community sizes remain similar.

Surprisingly, both methods outperform the Infomap method, which performed superbly in previous tests~\cite{Lancichinetti:2009p7112}, when the appropriate resolution parameter is chosen. This show that determining the correct or meaningful resolution is an important issue. This remains a challenging problem, and various methods have been proposed to do so~\cite{Fortunato:2010p6733}, for example by looking at the stability of multiple (randomized) runs of an algorithm~\cite{Lambiotte:2010p6743,Ronhovde:2010p7486}, by looking for large ranges of the parameter over which the results remain stable~\cite{Arenas:2008p12974}, investigating the stability when the network is slightly perturbed~\cite{Gfeller:2005p13024} or by looking at how significant the partition is compared to a graph ensemble~\cite{Bianconi:2009p13095}. 

We have also performed extensive tests on hierarchical networks, where the
method also performs well, and is able to extract the two different levels
of communities effectively, as displayed in Fig.~\ref{fig:performance_hierarchical}. For relatively low $\mu_2 \lessapprox 0.7$, the first (larger) level becomes more clear for low $\mu_1$, while the second (smaller) level becomes more clear for larger $\mu_1$. This is both the
case for a recent hierarchical version of the Infomap method~\cite{Rosvall:2011p11938} and the CPM method. The Infomap method seems to be slightly better at detecting the correct communities, but the CPM method remains highly competitive. The possibility for having various scales of description of the network seems important, as many networks seem to have at least some hierarchical structure.

\section{Conclusion}

Several community detection methods, among which modularity, are affected by the problem of the resolution limit. In this paper we have provided a novel rigorous definition of what it means for a community detection method to be resolution (limit) free. Most importantly, we are able to prove exactly which community detection methods are resolution-limit-free, namely those methods that use local weights. This also clarifies the relationship between local methods and the resolution limit. However, we do not address the issue of determining an actual meaningful resolution, which remains a challening problem.

Moreover, there does not seem to be much room for having resolution-limit-free methods without local weights. Of the few possibilities available for having resolution-limit-free community detection, the constant Potts Model (CPM) we introduced in this paper seems to be the simplest possible formulation, and performs excellent. A rigorous definition of resolution-limit-free community detection allows for a more articulate analysis, and induces further progress on developing novel and meaningful methods.

\begin{acknowledgments}
  We acknowledge support from a grant ``Actions de recherche concert\'ees --- Large Graphs and Networks'' of the ``Communaut\'e Fran\c caise de Belgique'' and from the Belgian Network DYSCO (Dynamical Systems, Control, and Optimization), funded by the Interuniversity Attraction Poles Programme, initiated by the Belgian State, Science Policy Office. The authors would like to thank Vincent Blondel, Arnaud Browet, Jean-Charles Delvenne, Renaud Lambiotte, Gautier Krings and Julien Hendrickx for helpful comments and discussions.
\end{acknowledgments}

\appendix

\section{Louvain like Algorithm}

The algorithm we employ is derived from the Louvain
method~\cite{Blondel2008Fast}. We use the concept of node size, denoted by $n_i$
for a node $i$, initialized to $n_i = 1$ (indeed the community size $n_c =
\sum_{i} n_i \delta(\sigma_i,c)$ is related). We first iterate (randomly) over
all nodes, and put nodes greedily into the community that minimizes Eq.~\eqref{equ:our_model}. We subsequently create a new graph based on the communities, and new node sizes, and reiterate over this new smaller graph. More specifically:
\begin{enumerate}
\item Initialize $A_{ij} = w_{ij}$, with $w_{ij} = 1$ in the case of unweighted networks, and set $n_i = 1$ for all nodes $i$.
\item Loop over nodes $i$, remove it from its community and calculate for each community $c$ the increase if we would put node $i$ into community $c$,
\begin{equation}
	\Delta \Hf(\sigma_i = c) = -(e_{i \leftrightarrow c} - 2 \gamma n_i \sum_j n_j \delta(\sigma_j, c)),
\end{equation}
where $e_{i \leftrightarrow c} = \sum_{j} (A_{ij} + A_{ji})\delta(\sigma_j,c)$ is the number of edges between node $i$ and community $c$.
We put node $i$ into the community $c$ for which $\Delta \Hf(\sigma_i = c)$ is
minimal. We iterate until we can no longer decrease the objective function. \label{enum:step_move_nodes}
\item We build a new graph $A'_{cd} \allowbreak= \sum_{ij} A_{ij} \delta(\sigma_i, c)\delta(\sigma_j,d)$ and node sizes $n'_c = \sum_{i} n_i \delta(\sigma_i, c)$. We repeat step~\ref{enum:step_move_nodes} by setting $A = A'$ and $n = n'$ until the objective function can no longer be decreased.
\end{enumerate}
The implementation of the algorithm in C++ can be downloaded from the author's website: \url{http://perso.uclouvain.be/vincent.traag}.

Notice that for resolution-limit-free methods, the results should be unchanged on subgraphs. Hence, we could therefore perform the method recursively on subgraphs. We suggest then the following improvement. First we cut the network at each recursive call, until the density of the subgraph exceeds $\gamma$. Then, we recombine the subgraphs, and loop over nodes/communities to find improvements until we can no longer increase greedily, and return to the previous recursive function call. These calls should be easily parallelized, making community detection in even larger graphs or in an on-line setting possible by using cluster computing.

\section{Benchmark tests}

The benchmark networks are created using a known community structure, i.e. a
planted community structure. The community sizes $n_c$ are chosen from a
distribution following a power-law $\Pr(n_c = n) \sim n^{-\tau_2}$. The degrees
$k_i$ of the nodes are also chosen from a power-law distribution $\Pr(k_i = k)
\sim k^{-\tau_1}$.  The stubs are then connected, with probability $1 - \mu$
within a community, and with probability $\mu$ between two communities. A lower
bound $\underline{n_c}$ and upper bound $\overline{n_c}$ on the community sizes
is imposed, while for the degree the average degree $\langle k \rangle$ is
specified. For the hierarchical version, there are two levels, with the
communities of the second level embedded in the first level. A fraction of
$\mu_1$ of the links is placed between two different macro communities at the
first level, while a fraction of $\mu_2$ of the links are placed between the
small communities of the second level (but within the same large community).

Instead of detecting the resolution algorithmically, we calculate the proper resolution parameter value $\gamma$ analytically (and therefore, beforehand). In order to do so, we consider the following. The resolution parameter $\gamma$ acts as a sort of threshold on inner and outer community density. If we were to set $\gamma$ equal to the inner density, it would be rather difficult to fulfill the condition that the inner density should be higher than that, and similarly so for $\gamma$ equal to the outer density. So, we need to be as far as possible from both the inner density as well as the outer density, which would be simply the average of the two.

The inner density for a community having $n_c$ nodes can be easily found as
\begin{equation}
  p_{in} = \frac{(1 - \mu)\langle k \rangle}{n_c - 1},
\end{equation}
and the outer density (i.e. all the edges originating from a community to the
outside) is 
\begin{equation}
p_{out} = \frac{\mu \langle k \rangle}{n - n_c},
\end{equation}
where $n$ is simply the total number of nodes. The average community size
$\langle n_c \rangle$, which is proportional to
\begin{equation}
  \langle n_c \rangle \sim \sum_{n=\underline{n_c}}^{\overline{n_c}} n n^{-\beta},
\end{equation}
where $\underline{n_c}$ is the minimal community size and $\overline{n_c}$ the
maximal community size, than gives us the $\langle p_{in} \rangle$ and $\langle
p_{out} \rangle$ for the average community size. The best resolution parameter is then $\gamma^* = \frac{1}{2} (\langle p_{in} \rangle + \langle p_{out} \rangle)$.

For the hierarchical test networks, we can perform a similar analysis, and use the average of the inner and outer density, similar as before, for the two different levels. Ordinarily, the communities are assumed to exist whenever $p_{in } > p_{out}$.

For modularity, we can also calculate similar bounds. When we define by $e_c$ the number of edges within community $c$ and by $[e_c]$ the number of expected edges within a community, modularity can be written as 
\begin{equation}
  \Hf = -\sum_c e_c - \gamma_{RB} [e_c]. 
\end{equation}
Hence, each community should have a ``expected density'' or ``degree density'' $\tilde{p}_{in} = e_c/[e_c]$ within communities lower than $\gamma$, while the outer degree density should be lower between communities. Writing this out in terms of the configuration model, given the model of the test networks, we arrive at
\begin{align}
 \tilde{p}_{in}  &= \frac{(1 - \mu)n}{n_c}, \\
 \tilde{p}_{out} &= \frac{\mu n}{n - n_c}.
\end{align}
These degree densities lack a clear interpretation, in contrast with CPM. Similar as before we simply set $\gamma^*_{RB} = \frac{1}{2} (\langle \tilde{p}_{in} \rangle + \langle \tilde{p}_{out} \rangle)$ for the average community size $\langle n_c \rangle$.

For comparing our results to the known community structure, we use the normalized mutual information. Given two different partitions $C$ and $D$, the mutual information $I$ is defined as
\begin{equation*}
 I(C,D) = - \sum_{r,s} \frac{n_{r,s}}{n} \log n \frac{n_{r,s}}{n_r n_s}
\end{equation*}
with $n_{r,s}$ being the number of nodes that are in community $r$ in partition $C$ \emph{and} in community $s$ in partition $D$, while $n_r$ simply denotes the number of nodes in community $r$. The normalized mutual information $I_n(C,D)$ is then defined as
\begin{equation*}
 I_n(C,D) = \frac{2I(C,D)}{H(C) + H(D)},
\end{equation*}
where $H(C)$ indicates the entropy of a partition $C$, which is defined as
\begin{equation*}
 H(C) = - \sum_{c} \frac{n_c}{n} \log \frac{n_c}{n}.
\end{equation*}
The normalized mutual information $0 \leq I_n(C, D) \leq 1$, with $1$ indicating equivalent partitions.

\end{document}